\newtheorem{theorem}{Theorem}
\newtheorem{lemma}{Lemma}
\journal{arXiv}
\begin{document}

\begin{frontmatter}

%% Title, authors and addresses

%% use the tnoteref command within \title for footnotes;
%% use the tnotetext command for the associated footnote;
%% use the fnref command within \author or \address for footnotes;
%% use the fntext command for the associated footnote;
%% use the corref command within \author for corresponding author footnotes;
%% use the cortext command for the associated footnote;
%% use the ead command for the email address,
%% and the form \ead[url] for the home page:
%%
%% \title{Title\tnoteref{label1}}
%% \tnotetext[label1]{}
%% \author{Name\corref{cor1}\fnref{label2}}
%% \ead{email address}
%% \ead[url]{home page}
%% \fntext[label2]{}
%% \cortext[cor1]{}
%% \address{Address\fnref{label3}}
%% \fntext[label3]{}

\title{A Note on the Inapproximability of Induced Disjoint Paths}

%% use optional labels to link authors explicitly to addresses:
%% \author[label1,label2]{<author name>}
%% \address[label1]{<address>}
%% \address[label2]{<address>}
\author{Gaoxiu Dong\corref{G.Dong}}
\ead{dgx\_01@outlook.com}
%%\address{Hangzhou College of Commerce, Zhejiang GongShang University, Hangzhou 310035, China}
\author{Weidong Chen\corref{W.Chen}}
\cortext[W.Chen]{Corresponding Author. Tel.: +86 13005158508; Fax: +86 20 85215418.}
\ead{chenwd@scnu.edu.cn}
\address{Department of Computer Science, South China Normal University, Guangzhou 510631, China}

\begin{abstract}

We study the inapproximability of the induced disjoint paths problem on an arbitrary $n$-node $m$-edge undirected graph, which is to connect the maximum number of the $k$ source-sink pairs given in the graph via induced disjoint paths. It is known that the problem is NP-hard to approximate within $m^{{1\over 2}-\varepsilon}$ for a general $k$ and any $\varepsilon>0$. In this paper, we prove that the problem is NP-hard to approximate within $n^{1-\varepsilon}$ for a general $k$ and any $\varepsilon>0$ by giving a simple reduction from the independent set problem.
\end{abstract}

\begin{keyword}
%% keywords here, in the form: keyword \sep keyword
Approximation algorithm\sep Inapproximability \sep Hardness of approximation \sep Induced disjoint paths
%% MSC codes here, in the form: \MSC code \sep code
%% or \MSC[2008] code \sep code (2000 is the default)
\end{keyword}

\end{frontmatter}

%%
%% Start line numbering here if you want
%%
% \linenumbers

%% main text

\section{Problem and Hardness of Approximation}\label{sec-1}\

%As a generalization of node-disjoint paths,
The concept of induced disjoint paths is recently introduced to characterize some non-interfering situation during data transmission in wireless networks \cite{K. Kawarabayashi 2008, K. Zhang 2011}. A set of paths in an undirected graph (graph, for short) are called induced disjoint paths if each one of them has no chords (i.e., is an induced path) and any two of them have neither common nodes nor adjacent nodes. Given a graph $G$ and a collection of $k$ source-sink pairs in $G$ ($k$ as part of the input of the problem), the induced disjoint paths problem (IDPP) is to connect the maximum number of these source-sink pairs via induced disjoint paths. Zhang et al. \cite{K. Zhang 2011} have shown that for any $\varepsilon>0$, it is NP-hard to approximate IDPP to within $m^{{1\over 2}-\varepsilon}$ on an arbitrary $n$-node $m$-edge graph, and there is a greedy algorithm with approximation ratio $\sqrt{m}$, matching the lower bound $\sqrt{m}$. In this paper, we will prove that for any $\varepsilon >0$, it is NP-hard to approximate IDPP to within $n^{1-\varepsilon}$ on an arbitrary $n$-node graph. Our method is based on a simple approximation-preserving reduction from the independent set problem to IDPP, and an inapproximability result for finding maximum independent sets in general graphs.

IDPP is an extension of the node-disjoint paths problem (DPP), one simple version of the disjoint paths problem that is one of the classic NP-hard combinatorial optimization problems. In DPP, we are given a graph $G$ and a set of $k$ source-sink pairs in $G$ ($k$ as part of the input of the problem), and the objective is to find a largest subset of the source-sink pairs that can be simultaneously connected in an node-disjoint manner. In fact, any instance of DPP can be reduced to an instance of IDPP by subdividing every edge into two edges. Thus, IDPP is harder than DPP. As a comparison, the best known approximation guarantee for DPP on an arbitrary $n$-node graph is $O(\sqrt{n})$  \cite{C. Chekuri 2006, T. Nguyen 2007}. Recently Andrews et al. \cite{M. Andrews 2010} have shown that for any $\varepsilon >0$, DPP is hard to approximate within $\log^{{1\over 2}-\varepsilon}n$ on an arbitrary $n$-node graph, unless NP$\subseteq$ZPTIME($n^{poly(\log n)}$). This might be the best known inapproximability result for DPP on general graphs. An interesting open question is whether there exists an $\varepsilon>0$ so that there is no polynomial time $O(n^\varepsilon)$-approximation algorithm for DPP on an arbitrary $n$-node graph, unless P=NP \cite{J.M. Kleinberg 1996}.

\section{New Result on Hardness of Approximation}\label{sec-2}\
This section is devoted to the proof of a new approximation hardness result for IDPP on general graphs.
The main theorem is as follows.

\begin{theorem}\label{theorem-1}
On an arbitrary $n$-node graph, there can be no polynomial time algorithm that approximates IDPP to within $n^{1-\varepsilon}$ for any $\varepsilon>0$, unless P=NP.
\end{theorem}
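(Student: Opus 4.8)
The plan is to give a polynomial-time, approximation-preserving reduction from the maximum independent set problem (MIS) to IDPP and then invoke the known strong hardness of MIS: for every $\varepsilon'>0$ it is NP-hard to approximate MIS on an $N$-node graph to within $N^{1-\varepsilon'}$ (H{\aa}stad, derandomized by Zuckerman). Given an MIS instance $G=(V,E)$ with $|V|=N$, I would build a graph $H$ with $N$ source--sink pairs as follows: keep a copy of every vertex of $V$ and every edge of $E$ among these copies; then for each $v\in V$ add two fresh vertices $s_v,t_v$ together with the edges $s_vv$ and $vt_v$, and declare $\{(s_v,t_v):v\in V\}$ to be the source--sink pairs. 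Thus $H$ has $n=3N$ nodes and is constructed in polynomial time.

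The key step is the claim that the optimum of this IDPP instance equals the independence number $\alpha(G)$. Since $s_v$ and $t_v$ each have degree $1$ in $H$ with sole neighbour $v$, any $s_v$--$t_v$ path must begin $s_vv$ and end $vt_v$, so, paths being vertex-simple, the unique $s_v$--$t_v$ path is $P_v:=s_v\,v\,t_v$, which is trivially an induced path (as $s_v\not\sim t_v$). For $u\neq v$ the paths $P_u$ and $P_v$ share no vertex, and the only possible adjacency between their vertex sets is the edge $uv$, which is present exactly when $uv\in E$. Hence a set $\{(s_v,t_v):v\in S\}$ can be simultaneously realized by induced disjoint paths iff $S$ is independent in $G$; maximizing over $S$ gives $\mathrm{OPT}_{\mathrm{IDPP}}(H)=\alpha(G)$, and any feasible IDPP solution of size $\ell$ yields, by reading off the indices $v$ used, an independent set of $G$ of size $\ell$ in polynomial time.

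To turn this into the hardness bound, fix $\varepsilon>0$ and suppose some polynomial-time algorithm approximates IDPP within $n^{1-\varepsilon}$. Running it on $H$ returns a feasible solution of size at least $\alpha(G)/n^{1-\varepsilon}=\alpha(G)/(3N)^{1-\varepsilon}$, hence an independent set of $G$ of that size, i.e.\ a polynomial-time $(3N)^{1-\varepsilon}$-approximation for MIS. Since $(3N)^{1-\varepsilon}\le 3N^{1-\varepsilon}\le N^{1-\varepsilon/2}$ once $N\ge 3^{2/\varepsilon}$ (smaller instances are solved by brute force in constant time), this is a polynomial-time $N^{1-\varepsilon/2}$-approximation for MIS on $N$-node graphs, which forces P$=$NP. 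This establishes Theorem~\ref{theorem-1}.

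I expect no serious obstacle. The two points that need care are (i) verifying that the degree-$1$ terminals genuinely force each $s_v$--$t_v$ connection to be the unique two-edge path $P_v$, so that the correspondence with independent sets is exact in both directions (including that $P_v$ is chordless and that no two selected paths are adjacent unless the corresponding vertices are adjacent in $G$); and (ii) the bookkeeping that absorbs the constant factor $3^{1-\varepsilon}$ and the blow-up $n=3N$ into a negligible loss in the exponent, so that the $N^{1-\varepsilon'}$ inapproximability of MIS applies with $\varepsilon'=\varepsilon/2$.
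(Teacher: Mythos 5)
Your proposal is correct and follows essentially the same route as the paper: the identical reduction attaching a pendant source--sink pair $s_v,t_v$ to each vertex $v$ (so $n=3N$), the same equivalence between realizable pair-sets and independent sets, and the same appeal to Zuckerman's $N^{1-\varepsilon}$ hardness for maximum independent set. The only cosmetic difference is that you absorb the factor-$3$ blow-up on the MIS side (with a brute-force case for small $N$), whereas the paper isolates the same bookkeeping as a separate lemma on the IDPP side; your write-up is in fact somewhat more careful about verifying that the pendant terminals force the unique path $s_v\,v\,t_v$.
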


For the purpose of clarification, we prove below two lemmas that together immediately lead to Theorem \ref{theorem-1}.
\begin{lemma}\label{lemma-1}
On an arbitrary $n$-node graph, if there exists a polynomial time algorithm for IDPP with approximation ratio $n^{1-\varepsilon}$ for a certain $\varepsilon>0$, then there exists a polynomial time algorithm for IDPP with approximation ratio $({n\over3})^{1-\varepsilon'}$ for some $\varepsilon'>0$.
\end{lemma}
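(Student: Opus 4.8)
The plan is to regard this lemma as a pure renormalization of the approximation ratio: passing from the guarantee $n^{1-\varepsilon}$ to the guarantee $(n/3)^{1-\varepsilon'}$ only perturbs the ratio by a constant factor and a small change of exponent, so no gap amplification or graph product is needed. The entire argument is an elementary comparison of two functions of $n$, together with a brute-force escape hatch for instances too small for that comparison to apply. (The substantive step, the approximation-preserving reduction from the independent set problem, is isolated in the second lemma below; this lemma merely puts the ratio into the shape that composes with that reduction.)

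Concretely, I would first dispose of the trivial case $\varepsilon\ge 1$, in which $n^{1-\varepsilon}\le 1$ and the hypothesized algorithm is forced to be exact, so the conclusion holds with, say, $\varepsilon'=1$. Assume then $0<\varepsilon<1$ and set $\varepsilon':=\varepsilon/2$, so that $0<\varepsilon'<\varepsilon$ and $1-\varepsilon'\in(0,1)$. The key observation is the inequality
\[ n^{1-\varepsilon}\;\le\;(n/3)^{1-\varepsilon'}, \]
which, after taking logarithms, is equivalent to $(\varepsilon-\varepsilon')\ln n\ge(1-\varepsilon')\ln 3$, i.e.\ to $n\ge n_0$, where $n_0:=3^{(1-\varepsilon')/(\varepsilon-\varepsilon')}$. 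Since $\varepsilon'<\varepsilon$, the exponent $(1-\varepsilon')/(\varepsilon-\varepsilon')$ depends only on $\varepsilon$, so $n_0$ is an absolute constant.

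The new algorithm then works as follows. On an $n$-node IDPP instance with optimum $\mathrm{OPT}$: if $n\ge n_0$, run the hypothesized $n^{1-\varepsilon}$-approximation algorithm; it returns in polynomial time a family of induced disjoint paths connecting at least $\mathrm{OPT}/n^{1-\varepsilon}\ge \mathrm{OPT}/(n/3)^{1-\varepsilon'}$ source-sink pairs, which is exactly the required guarantee. If $n<n_0$, the graph has only constantly many vertices, hence only constantly many source-sink pairs and only constantly many candidate systems of vertex subsets and orderings forming induced disjoint paths; checking feasibility of such a system reduces to verifying that each path is chordless and that any two of the paths have neither a common vertex nor a pair of adjacent vertices, so an optimal solution is found by exhaustive search in $O(1)$ time, which trivially meets the ratio $(n/3)^{1-\varepsilon'}$. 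Both branches run in polynomial time, so the composite algorithm is a polynomial-time $(n/3)^{1-\varepsilon'}$-approximation for IDPP, as claimed.

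I do not anticipate a real obstacle here; the only points requiring care are confirming that the crossover value $n_0$ is a genuine constant (so that the brute-force branch does not spoil the polynomial running time) and that exhaustive search on constant-size instances correctly solves IDPP, both of which are immediate. The real weight of Theorem~\ref{theorem-1} lies in the second lemma, not in this normalization step.
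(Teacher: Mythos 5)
Your proof is correct and follows essentially the same strategy as the paper's: a brute-force branch for instances below a constant threshold and the hypothesized algorithm otherwise, with an elementary inequality showing $(n/3)^{1-\varepsilon'}\ge n^{1-\varepsilon}$ above that threshold. The only differences are cosmetic --- you choose $\varepsilon'=\varepsilon/2$ with threshold $3^{(1-\varepsilon')/(\varepsilon-\varepsilon')}$ where the paper uses $\varepsilon'=\varepsilon^2$ with threshold $3^{1+1/\varepsilon}$, and you explicitly dispose of the trivial case $\varepsilon\ge 1$.
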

\begin{proof}
Assume that we have a polynomial time algorithm app-IDPP with approximation ratio $n^{1-\varepsilon}$ for a certain $\varepsilon>0$. We now describe a polynomial time algorithm and claim that the algorithm has approximation ratio $({n\over3})^{1-\varepsilon'}$ for some $\varepsilon'>0$. The algorithm uses as input an instance of IDPP, and calls a procedure to solve the instance according to whether or not $n$ is smaller than $3^{1+{1\over\varepsilon}}$.

\textbf{Case 1} ($n<3^{1+{1\over\varepsilon}}$): It calls a brute-force procedure to solve the instance. This can be done in a constant time that depends on $1\over \varepsilon$.

\textbf{Case 2} ($n\ge 3^{1+{1\over\varepsilon}}$): It calls app-IDPP to solve the instance. This can be done in polynomial time according to the above assumption.

It is evident that the algorithm has a polynomial time complexity, and the solution yielded in Case 1 is optimal. Therefore, the approximation ratio of the algorithm relies on the solution yielded by app-IDPP in Case 2. Since app-IDPP is an $n^{1-\varepsilon}$-approximation algorithm, so is the algorithm. In order to prove that the algorithm has the claimed approximation ratio, $({n\over3})^{1-\varepsilon'}$, it suffices to show that there exists $\varepsilon'>0$ such that $({n\over3})^{1-\varepsilon'}\ge n^{1-\varepsilon}$. For this purpose, letting $\varepsilon'=\varepsilon^2$, we readily have from $n\ge 3^{1+{1\over\varepsilon}}$
\begin{displaymath}
n^{(1-\varepsilon)\varepsilon}\ge 3^{(1-\varepsilon)(1+\varepsilon)},
\end{displaymath}
\begin{displaymath}
n^{1-\varepsilon^2}\ge 3^{1-\varepsilon^2}n^{1-\varepsilon},
\end{displaymath}
\begin{displaymath}
{({n\over 3})}^{1-\varepsilon'}\ge n^{1-\varepsilon}.
\end{displaymath}
This completes the proof.
\end{proof}

The complexity of approximating IDPP can be related to that of finding maximum independent sets in graphs by a reduction from the independent set problem to IDPP in the proof of Lemma \ref{lemma-2} below. Recall that an independent set of a graph is a set of nodes no two of which are connected by an edge, and the independent set problem is to find a maximum independent set in the graph. The independent set problem and
the clique problem are complementary: a clique in $G$ is an independent set in the complement graph of $G$ and vice versa. It is known that on an
arbitrary $n$-node graph, it is NP-hard to approximate the maximum clique problem to within $n^{1-\varepsilon}$ for any $\varepsilon>0$ \cite{D. Zuckerman
 2007}. This inapproximability result for the clique problem is applied equally well to the independent set problem. Now we prove the following lemma.

\begin{lemma}\label{lemma-2}
On an arbitrary $n$-node graph, there can be no polynomial time algorithm that approximates IDPP to within $({n\over3})^{1-\varepsilon}$ for any $\varepsilon>0$, unless P=NP.
\end{lemma}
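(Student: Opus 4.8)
The plan is to give an approximation-preserving reduction from the independent set problem to IDPP that increases the instance size by exactly a factor of $3$, and then combine it with the cited NP-hardness of approximating the maximum independent set within $n^{1-\varepsilon}$. The reduction is the following. Given an instance $G=(V,E)$ of the independent set problem with $|V|=n$, build a graph $H$ by keeping all nodes and edges of $G$ and, for each node $v\in V$, adding two new nodes $s_v,t_v$ together with the two edges $s_vv$ and $vt_v$; take the $k=n$ pairs $(s_v,t_v)$, $v\in V$, as the source-sink pairs of the IDPP instance. Then $H$ has $3n$ nodes and is constructed in polynomial time.

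The heart of the argument is the identity $\mathrm{OPT}(H)=\alpha(G)$, where $\mathrm{OPT}(H)$ is the maximum number of source-sink pairs of $H$ connectable by induced disjoint paths and $\alpha(G)$ is the independence number of $G$. The key observation is that in $H$ both $s_v$ and $t_v$ have degree $1$ with unique neighbour $v$, so any path from $s_v$ to $t_v$ must begin $s_v,v,\dots$ and end $\dots,v,t_v$; since a path has no repeated node, it must be exactly the length-$2$ path $s_v,v,t_v$, which is automatically chordless because $s_vt_v\notin E(H)$. Hence, if a feasible IDPP solution connects the pairs indexed by a set $S\subseteq V$, the only edges of $H$ that can join two of the selected paths are the edges of $G$ between vertices of $S$; therefore the selected paths are pairwise induced disjoint if and only if $S$ is an independent set of $G$. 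Taking $S$ maximum gives $\mathrm{OPT}(H)\ge\alpha(G)$, and the forced structure of the paths gives the reverse inequality. Moreover the correspondence is constructive: from any set of induced disjoint paths in $H$ connecting $\ell$ pairs one reads off in linear time an independent set of $G$ of size $\ell$, and vice versa.

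To finish, suppose there were a polynomial-time algorithm $A$ approximating IDPP within $(n'/3)^{1-\varepsilon}$ on every $n'$-node graph, for some fixed $\varepsilon>0$. Running $A$ on $(H,\{(s_v,t_v)\}_{v\in V})$ with $n'=3n$ yields induced disjoint paths connecting at least $\mathrm{OPT}(H)\big/(n'/3)^{1-\varepsilon}=\alpha(G)\big/n^{1-\varepsilon}$ of the pairs, hence, by the constructive correspondence above, a polynomial-time $n^{1-\varepsilon}$-approximation for the independent set problem on $G$. Since the latter is NP-hard by the cited inapproximability result for the clique/independent set problem within $n^{1-\varepsilon}$, this forces P=NP, which proves the lemma.

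I expect the only point that needs genuine care — the ``main obstacle'', such as it is — to be the verification of $\mathrm{OPT}(H)=\alpha(G)$: specifically, confirming that each path connecting $(s_v,t_v)$ is forced to be $s_v,v,t_v$ and is induced, and that the only possible cross-edges between two such paths are original edges of $G$. Everything else — the size blow-up factor of $3$, the matching of approximation ratios, and the invocation of the known hardness — is routine bookkeeping.
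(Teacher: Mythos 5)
Your proposal is correct and follows essentially the same route as the paper: the identical reduction (two pendant nodes per vertex, $k=n$ pairs, a $3n$-node instance) combined with the Zuckerman inapproximability bound for independent set. In fact you supply more detail than the paper does on the key equivalence $\mathrm{OPT}(H)=\alpha(G)$ (the forced form $s_v,v,t_v$ of each path and the fact that the only possible cross-edges are original edges of $G$), which the paper dismisses as obvious.
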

\begin{proof}
We consider a reduction from the independent set problem to IDPP. Given an instance of the independent set problem, a graph $G'$ with $n'$ nodes and $m'$ edges, we below construct an instance of IDPP, which contains a graph $G$ and $k$ pairs of nodes in $G$ .

$\bullet$ From the graph $G'$, we form the graph $G$ by adding $n'$ pairs of nodes to $G'$, one pair for each node in $G'$, and adding $2n'$ edges to $G'$ to join each pair of added nodes to the corresponding node in $G'$. The resulting graph $G$ has $3n'$ nodes and $m'+2n'$ edges.

$\bullet$ Letting $k=n'$, we form $k$ pairs of nodes in $G$ by selecting the $n'$ pairs of added nodes.

The above reduction can be done in polynomial time. Moreover, it is obvious that the given instance has an independent set of size $t$ if and only if the constructed instance of IDPP has $t$ induced disjoint paths. This implies that the reduction is an approximability-preserving one. By the above inapproximability result for the independent set problem on a general graph, unless P=NP, there does not exist a polynomial time algorithm for approximating IDPP to within a factor of $({n\over3})^{1-\varepsilon}$ of the optimal solution since $G$ has $3n'$ nodes. The proof is complete.
\end{proof}

\section{Conclusion}\label{sec-7}\
A new result on hardness of approximation for IDPP on general graphs has been proved in this paper. That is, on an arbitrary $n$-node $m$-edge graph,
for all $\varepsilon>0$, approximating IDPP to within $n^{1-\varepsilon}$ is NP-hard. The new result is distinct from the known one in the literature that states that for all $\varepsilon>0$, approximating IDPP to within $m^{{1\over 2}-\varepsilon}$ is NP-hard. Since $m^{1/2}<n$ always holds for any $n$-node $m$-edge graph, the new result yields a stronger lower bound on the best possible approximation guarantee for IDPP than the known result can yield, namely, $m^{{1\over 2}-\varepsilon}<n^{1-\varepsilon}$ for any $\varepsilon>0$. On those graphs with $m=O(n^{\alpha})$ for $\alpha<2$ (for example, sparse graphs), the new result obviously can give such a tighter lower bound.

\section*{Acknowledgements}
This research is supported by the National Natural Science Foundation of China (No. 61370003), and by the Scientific Research Foundation for the Returned Overseas Chinese Scholars, State Education Ministry.

%% The Appendices part is started with the command \appendix;
%% appendix sections are then done as normal sections
%% \appendix

%% \section{}
%% \label{}

%% References
%%
%% Following citation commands can be used in the body text:
%% Usage of \cite is as follows:
%%   \cite{key}          ==>>  [#]
%%   \cite[chap. 2]{key} ==>>  [#, chap. 2]
%%   \citet{key}         ==>>  Author [#]

%% References with bibTeX database:

\bibliographystyle{model1-num-names}
%% Authors are advised to submit their bibtex database files. They are
%% requested to list a bibtex style file in the manuscript if they do
%% not want to use model1-num-names.bst.

\end{document}